\newcommand{\todo}[1][\null]{\ensuremath{\clubsuit}}
\newcommand{\noprint}[1]{}
\newcommand{\checked}[1][\null]{\ensuremath{\boldsymbol{\surd}}}
\newcommand{\p}{\partial}
\newcommand{\ri}{\mathfrak r}
\newtheorem{theorem}{Theorem}
\theoremstyle{definition}
\newtheorem{remark}[theorem]{Remark}
\begin{document}

\par\noindent {\LARGE\bf
On the prolongation of a local hydrodynamic-type \\Hamiltonian operator to a nonlocal one
\par}

\vspace{4mm}\par\noindent {\large Stanislav Opanasenko$^{\dag\ddag}$ and Roman O.\ Popovych$^{\ddag\S}$
}

\vspace{4mm}\par\noindent {\it\small
$^{\dag}$~Sezione INFN di Lecce, via per Arnesano, 73100 Lecce, Italy\par
}

\vspace{1mm}\par\noindent{\it\small
$^\S$\,Mathematical Institute, Silesian University in Opava, Na Rybn\'\i{}\v{c}ku 1, 746 01 Opava, Czech Republic\par
}

\vspace{1mm}\par\noindent{\it\small
$^\ddag$\,Institute of Mathematics of NAS of Ukraine, 3 Tereshchenkivska Str., 01024 Kyiv, Ukraine\par
}

\vspace{1mm}\par\noindent
\textup{E-mail:} stanislav.opanasenko@le.infn.it, rop@imath.kiev.ua\!
\par

\vspace{6mm}\par\noindent\hspace*{9mm}\parbox{142mm}{\small
Nonlocal Hamiltonian operators of Ferapontov type are well-known objects
that naturally arise local from Hamiltonian operators of Dubrovin--Novikov type with the help of three constructions,
Dirac reduction, recursion scheme and reciprocal transformation.
We provide an additional construction, namely the prolongation of a local hydrodynamic-type Hamiltonian operator of a subsystem
to its nonlocal counterpart for the entire system.
We exemplify this construction by a system governing an isothermal no-slip drift flux.
}\par%\vspace{1mm}

\noprint{
Keywords: nonlocal Hamiltonian structure, hydrodynamic-type system, isothermal no-slip drift flux

MSC: 37K05 (Primary) 76M60, 35B06 (Secondary)

76-XX   Fluid mechanics {For general continuum mechanics, see 74Axx, or other parts of 74-XX}
 76Mxx  Basic methods in fluid mechanics [See also 65-XX]
  76M60 Symmetry analysis, Lie group and algebra methods

37-XX	Dynamical systems and ergodic theory
 37Kxx  Infinite-dimensional Hamiltonian systems [See also 35Axx, 35Qxx]
  37K05 Hamiltonian structures, symmetries, variational principles, conservation laws

35-XX   Partial differential equations
 35Bxx  Qualitative properties of solutions
  35B06 Symmetries, invariants, etc.

Potential reviewers: Sheftel, Ferapontov, Anco, Shevyakov, Grunland, Novikov, Olver, Oberlack

}

\section{Introduction}\label{sec:IDFMIntroduction}

Since the seminal paper~\cite{DubrovinNovikov1984} was published, hydrodynamic-type Hamiltonian operators
$A=(A^{ij})$ in dimension (1+1) with components
\begin{gather}\label{IDFM:eq:LocalHamOper}
A^{ij}=g^{ij}(u)\mathrm D_x+b^i_{jk}(u)u^k_x
\end{gather}
where $u(t,x)=\big(u^1(t,x),\dots,u^n(t,x)\big)$, $(t,x)\in\mathbb R^2$ and Einstein summation convention is assumed,
have been thoroughly studied in many directions.
Recall that the nature of such operators (as well as the corresponding Poisson brackets) is differential-geometric.
More specifically, under the condition $\det(g^{ij})\ne0$ and the representation \smash{$b^i_{jk}=-g^{is}\Gamma^j_{sk}$},
upon (nondegenerate) point transformations of dependent variables,
the coefficients $g^{ij}$ and \smash{$\Gamma^j_{sk}$}
transform as coefficients of a (pseudo-)Riemannian metric and as Christoffel symbols of an affine connection~$\nabla$, respectively.
Moreover, for the operator~$A$ of the form~\eqref{IDFM:eq:LocalHamOper} to be Hamiltonian,
the connection~$\nabla$ should be symmetric and compatible with the metric, while the metric itself should be flat.

The subject of the present paper is the nonlocal generalisation of the hydrodynamic-type Hamiltonian operators
in the following way:
\begin{gather}\label{IDFM:eq:NonlocalHamOper}
A^{ij}=g^{ij}(u)\mathrm D_x-g^{is}\Gamma^j_{sk}(u)u^k_x+\mathfrak T^{ij}_w
\quad\mbox{with}\quad
\mathfrak T^{ij}_w:=\sum\limits_{\alpha=1}^N
\epsilon_\alpha w^i_{\alpha k}(u)u^k_x\mathrm D^{-1}_x w^j_{\alpha l}(u)u^l_x,
\end{gather}
where $\epsilon_\alpha\in\{-1,1\}$,
$\alpha$ runs from~1 to~$N$, and $w_\alpha:=(w^j_{\alpha l})$ transform as $(1,1)$-tensors (affinors) upon point transformations of~$u$.
The Hamiltonian property of operators of the form~\eqref{IDFM:eq:NonlocalHamOper}
and their geometric interpretation were related in~\cite{Ferapontov1991a}.

\begin{theorem}[\cite{Ferapontov1991a}]\label{thm:NonlocalHamiltonianOpsWithAffinors}
An operator given by~\eqref{IDFM:eq:NonlocalHamOper} is Hamiltonian if and only if
the tensor~$g=(g^{ij})$ defines a (pseudo-)Riemannian metric,
the connection~$\nabla$ is symmetric and compatible with the metric, $\nabla_k g^{ij}=0$,
the tensor~$g$ and the affinors $w_\alpha:=(w^j_{\alpha l})$ satisfy the system
\[
g_{ik}w^k_{\alpha j}=g_{jk}w^k_{\alpha i},\quad \nabla_k w^i_{\alpha j}=\nabla_j w^i_{\alpha k},\quad
R^{ij}_{\ \ kl}=\sum\limits_{\alpha=1}^N \epsilon_\alpha(w^i_{\alpha l}w^j_{\alpha k}-w^i_{\alpha k}w^j_{\alpha l})
\]
and the affinors~$w_\alpha$ commute, $[w_\alpha,w_\beta]=0$, $\alpha,\beta=1,\dots,N$.
Here $R^{ij}_{\ \ kl}:=g^{is}R^{j}_{\ skl}$, and $R^{j}_{\ skl}$ is the curvature tensor of the metric.
\end{theorem}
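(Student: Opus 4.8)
\medskip
\noindent\emph{Proposed proof.}\enspace
The plan is to invoke the standard characterisation: $A=(A^{ij})$ is Hamiltonian if and only if it is formally skew-symmetric, $(A^{ij})^{*}=-A^{ji}$, and it satisfies the Jacobi identity, the latter written in the form appropriate for nonlocal operators as the vanishing of the trilinear functional $\mathcal J[\xi,\eta,\zeta]$ obtained from the Schouten bracket $[A,A]$ within the formalism of functional multivectors, with $\mathrm D^{-1}_x$ treated as a formal left and right inverse of $\mathrm D_x$. Skew-symmetry is disposed of immediately: multiplication operators are self-adjoint and $(\mathrm D^{-1}_x)^{*}=-\mathrm D^{-1}_x$, whence a one-line computation gives $(\mathfrak T^{ij}_w)^{*}=-\mathfrak T^{ji}_w$ for \emph{every} choice of the affinors $w_\alpha$. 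Thus $A$ is skew-symmetric precisely when its local part $g^{ij}\mathrm D_x-g^{is}\Gamma^j_{sk}u^k_x$ is, which, exactly as in the Dubrovin--Novikov theory, amounts to $g^{ij}=g^{ji}$ together with $\partial_k g^{ij}=-g^{is}\Gamma^j_{sk}-g^{js}\Gamma^i_{sk}$; once the torsion-freeness $\Gamma^i_{jk}=\Gamma^i_{kj}$ is established below, this last relation is exactly metricity, $\nabla_k g^{ij}=0$.

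The substance of the proof is the Jacobi identity. I would expand $\mathcal J[\xi,\eta,\zeta]$ and sort its terms simultaneously by the number of surviving factors $\mathrm D^{-1}_x$ and by the pattern of $x$-derivatives falling on the covectors $\xi,\eta,\zeta$; as these patterns are functionally independent, the coefficient of each must vanish separately, and each such vanishing is a tensor identity to be matched against one line of the claimed system. The genuinely local patterns, those with no surviving $\mathrm D^{-1}_x$ (which also pick up contributions from the nonlocal tail whenever a $\mathrm D^{-1}_x$ cancels against a $\mathrm D_x$), force $\Gamma$ to be symmetric and metric-compatible and, in sharp contrast with the purely local Dubrovin--Novikov case, do \emph{not} force flatness but instead pin down the curvature, $R^{ij}_{\ \ kl}=\sum_\alpha\epsilon_\alpha(w^i_{\alpha l}w^j_{\alpha k}-w^i_{\alpha k}w^j_{\alpha l})$. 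The patterns carrying exactly one factor $\mathrm D^{-1}_x$ yield the algebraic symmetry $g_{ik}w^k_{\alpha j}=g_{jk}w^k_{\alpha i}$ of the lowered affinors together with the Codazzi-type equation $\nabla_k w^i_{\alpha j}=\nabla_j w^i_{\alpha k}$, while those carrying two factors $\mathrm D^{-1}_x$ vanish if and only if the affinors commute, $[w_\alpha,w_\beta]=0$. For the converse one substitutes the entire system back into $\mathcal J$ and verifies that every pattern cancels: the mixed ($\mathrm D^{-1}_x$-linear) patterns close up upon covariantly differentiating the curvature identity and using the Codazzi equations together with the differential Bianchi identity, and the $\mathrm D^{-1}_x$-quadratic patterns cancel by the commutativity of the $w_\alpha$ and the symmetry of $g_{ik}w^k_{\alpha j}$.

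The principal obstacle is precisely this bookkeeping. Because of the factors $\mathrm D^{-1}_x$ one cannot manipulate $A$ naively but must work throughout with integrated expressions, systematically commute $\mathrm D^{-1}_x$ past multiplication operators by parts, and symmetrise correctly over the three arguments before the tensor identities can be read off; keeping the local and the genuinely nonlocal contributions apart, and tracking which cancellations are automatic and which demand the geometric conditions, is where all the care lies. A conceptually cleaner, though ultimately equivalent, route is localisation: adjoin $N$ auxiliary dependent variables so that $A$ becomes the Dirac-type restriction to the $u$-variables of a genuinely local, flat Hamiltonian operator on the enlarged space, with the $w_\alpha$ playing the role of Weingarten (shape) operators; then the Dubrovin--Novikov characterisation recalled in the introduction, combined with the classical Gauss--Codazzi--Ricci equations, delivers both implications, the displayed system being nothing but the Gauss--Codazzi--Ricci system of an $n$-dimensional submanifold with flat normal bundle in an $(n+N)$-dimensional (pseudo-)Euclidean space. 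Verifying that this enlarged local structure is legitimate is, however, the same computation in another guise.
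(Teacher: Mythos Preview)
The paper does not prove this theorem at all: it is quoted verbatim from Ferapontov's 1991 article (the \texttt{\textbackslash cite} in the theorem header is the whole story), and the only thing the paper adds is the paragraph immediately following the statement, which records the geometric interpretation of the displayed system as the Gauss--Peterson--Mainardi--Codazzi equations of a submanifold $M^n$ with flat normal connection in a pseudo-Euclidean space of dimension $n+N$. So there is nothing to compare your proposal against within this paper.

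That said, your sketch is a faithful outline of how the result is actually established in the cited source: skew-symmetry fixes the symmetry of $g$ and (together with torsion-freeness) its metricity; the Jacobi identity, stratified by the number of surviving $\mathrm D_x^{-1}$ factors, produces exactly the torsion, curvature, Codazzi, symmetry-of-lowered-affinor, and commutativity conditions you list. Your alternative ``localisation'' route---realising $A$ as the restriction of a flat Dubrovin--Novikov operator on an enlarged phase space, with the $w_\alpha$ as Weingarten operators---is precisely the geometric picture the paper invokes in the paragraph after the theorem, so in that sense you and the paper are in complete agreement on interpretation. There is no gap in your plan; it is simply a proof the paper chose to outsource.
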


The equations in Theorem~\ref{thm:NonlocalHamiltonianOpsWithAffinors}
are nothing else but the Gauss--Peterson--Mainardi--Codazzi equations for submanifolds~$M^n$
with a flat normal connection in a (pseudo-)Euclidean space of dimension~$n+N$.
The metric~$g$ plays the role of the first quadratic form of~$M^n$,
and $w_\alpha$ are the Weingarten operators corresponding to the field of pairwise orthogonal unit normals.
More specifically, the second set of equations in~Theorem~\ref{thm:NonlocalHamiltonianOpsWithAffinors} are Peterson--Mainardi--Codazzi equations,
and the third set is the Gauss equations of the hypersurface.

Nonlocal Hamiltonian operators of the form~\eqref{IDFM:eq:NonlocalHamOper} are known to appear
as a result of such manipulations with their local counterparts~\eqref{IDFM:eq:LocalHamOper}
as recursion schemes, actions of reciprocal transformations or the Dirac reduction,
see a thorough review in~\cite{Ferapontov1995}.
In physical applications, they were shown to appear
in the context of higher-dimensional Witten--Dijkgraaf--Verlinde--Verlinde (WDVV) equations~\cite{VasicekVitolo2021}.

A recursion scheme applies whenever a hydrodynamic-type system of PDEs
\begin{gather}\label{IDFM:eq:HydroTypeSys}
u^i_t=v^i_j(u)u^i_x
\end{gather}
admits two compatible Hamiltonian operators~$A_0$ and~$A_1$ of the form~\eqref{IDFM:eq:LocalHamOper},
where the compatibility means that any linear combination of~$A_0$ and~$A_1$
is also a Hamiltonian operator of~\eqref{IDFM:eq:HydroTypeSys}.
The recursion operator $R=A_1A_0^{-1}$ defines
an infinite sequence of compatible nonlocal Hamiltonian operators $A_k=RA_{k-1}$, $k\geqslant 1$,
of the form~\eqref{IDFM:eq:NonlocalHamOper}, see examples in~\cite{AntonowiczFordy1987,FerapontovPavlov1991}.

Let a hydrodynamic-type system~\eqref{IDFM:eq:HydroTypeSys} admit a Hamiltonian operator~$A$ of the form~\eqref{IDFM:eq:LocalHamOper}
and two conserved currents $(\rho_1,\sigma_1)$ and $(\rho_2,\sigma_2)$, that is,
$\mathrm D_t\rho_i+\mathrm D_x\sigma_i=0$ in view of~\eqref{IDFM:eq:HydroTypeSys}, $i=1,2$.
Then the pair of the system~\eqref{IDFM:eq:HydroTypeSys} and the Hamiltonian operator~$A$ is mapped by the reciprocal transformation
\[
\mathrm d\tilde t=\sigma_1\mathrm dt+\rho_1\mathrm dx,\quad \mathrm d\tilde x=\sigma_2\mathrm dt+\rho_2\mathrm dx,
\]
to a pair of a hydrodynamic-type system and its Hamiltonian operator,
which may be of the nonlocal form~\eqref{IDFM:eq:NonlocalHamOper},
see details in~\cite{Ferapontov1989,Ferapontov1991c} and Remark~\ref{rem:ReciprocalTrans} below.

The Dirac reduction allows one to restrict a local hydrodynamic-type Hamiltonian operator on the phase space~$(u^1,\dots,u^{n+N})$
to a nonlocal Hamiltonian operator of the form~\eqref{IDFM:eq:NonlocalHamOper} on its certain subspace~$(u^1,\dots,u^n)$, see details in~\cite{Ferapontov1992}.

In the present paper, we propose another method of obtaining a nonlocal Hamiltonian operator~\eqref{IDFM:eq:LocalHamOper}
that is in a sense opposite to the Dirac reduction.
Starting with a local Hamiltonian operator of a hydrodynamic-type system~$\mathcal S$,
we prolong it to a nonlocal operator of a supersystem of~$\mathcal S$.

\section{Example}\label{sec:IDFMNonlocalHamiltonianStructure}

We consider the diagonalised form
%\begin{subequations}\label{eq:IDFMDiagonalizedSystem}
\begin{gather*}
\mathcal S\colon\quad
\ri^1_t+(\ri^1+\ri^2+1)\ri^1_x=0,%\label{eq:IDFMDEq1}
\quad
\ri^2_t+(\ri^1+\ri^2-1)\ri^2_x=0,%\label{eq:IDFMDEq2}
\quad
\ri^3_t+(\ri^1+\ri^2)\ri^3_x=0,%\label{eq:IDFMDEq3}
\end{gather*}
%\end{subequations}
of the  hydrodynamic-type system
\begin{gather*}
\tilde{\mathcal S}\colon\quad
\rho^1_t+u\rho^1_x+u_x\rho^1=0,\quad
\rho^2_t+u\rho^2_x+u_x\rho^2=0,\quad
(\rho^1+\rho^2)(u_t+uu_x)+\rho^1_x+\rho^2_x=0,
\end{gather*}
which appears in the theory of two-phase flow phenomenon
as an isothermal no-slip drift flux model~\cite{EvjeFlatten2007}.
Here, the Riemann invariants~$\ri^1$, $\ri^2$ and~$\ri^3$ are defined by
\[
\ri^1=\frac{u+\ln(\rho^1+\rho^2)}2,\quad \ri^2=\frac{u-\ln(\rho^1+\rho^2)}2,\quad \ri^3=\frac{\rho^2}{\rho^1},
\]
and $\ri:=(\ri^1,\ri^2,\ri^3)$.
Any constraint meaning that $\rho^1$ and $\rho^2$ are proportional, e.g., $\rho^2=0$,
reduces the system~$\tilde{\mathcal S}$ to the system~$\tilde{\mathcal S_0}$
describing one-dimensional isentropic gas flows with constant sound speed,
cf. the system (3)--(4) with $\nu=0$ in~\cite[Section~2.2.7]{RozhdestvenskiiJanenko1983}.
The system~$\mathcal S$ is partially coupled.
Its essential subsystem~$\mathcal S_0$ consisting of its first two equations
coincides with the diagonalized form of the system~$\tilde{\mathcal S_0}$
\mbox{\cite[Section~2.2.7, Eq.~(16)]{RozhdestvenskiiJanenko1983}.}
The system~$\mathcal S_0$ is known to possess three compatible Hamiltonian structures of hydrodynamic type~\cite{Nutku1987},
\begin{gather*}
\mathfrak H^1=\mathrm e^{\ri^2-\ri^1}
\left(\begin{pmatrix}
-1 & 0 \\ 0 & 1
\end{pmatrix}\mathrm D_x-\frac12
\begin{pmatrix}
\ri^2_x-\ri^1_x & \ri^1_x-\ri^2_x \\
\ri^2_x-\ri^1_x & \ri^1_x-\ri^2_x
\end{pmatrix}\right),\\
\mathfrak H^2=\mathrm e^{\ri^2-\ri^1}
\left(\begin{pmatrix}
1 & 0 \\ 0 & 1
\end{pmatrix}\mathrm D_x+\frac12
\begin{pmatrix}
\ri^2_x-\ri^1_x & -\ri^1_x-\ri^2_x \\
\ri^1_x+\ri^2_x & \ri^2_x-\ri^1_x
\end{pmatrix}\right),\\
\mathfrak H^3=\mathrm e^{\ri^2-\ri^1}
\left(\begin{pmatrix}
\ri^1 & 0 \\ 0 & \ri^2
\end{pmatrix}\mathrm D_x+\frac12
\begin{pmatrix}
(1-\ri^1)\ri^1_x+\ri^1\ri^2_x & -\ri^2\ri^1_x-\ri^1\ri^2_x \\
\ri^2\ri^1_x+\ri^1\ri^2_x     & -\ri^2\ri^1_x+(1+\ri^2)\ri^2_x
\end{pmatrix}\right).
\end{gather*}
\noprint{
with the associated families of Hamiltonians
\begin{gather*}
\mathcal H^1_{c_1,c_2}=\left(\frac14(\ri^1+\ri^2)^2+\frac12\big(\ri^1-\ri^2\big)+c_1\right){\rm e}^{\ri^1-\ri^2}+c_2(\ri^1+\ri^2),\\
\mathcal H^2_{c_1,c_2}=\mathrm e^{(\ri^1-\ri^2)/2}\left(c_1\sin\frac{\ri^1+\ri^2}2+c_2\cos\frac{\ri^1+\ri^2}2\right)-(\ri^1+\ri^2)\mathrm e^{\ri^1-\ri^2},\\
\mathcal H^3_{c_1,c_2}=c_1\operatorname{erf}\left(\frac{\sqrt{\ri^2}+\sqrt{-\ri^1}}{\sqrt{2}}\right)+c_2\operatorname{erf}\left(\frac{\sqrt{\ri^2}-\sqrt{-\ri^1}}{\sqrt{2}}\right).
-2\mathrm e^{\ri^1-\ri^2}.
\end{gather*}

The multipliers of parameterizing constants~$c_1$ and~$c_2$ are densities of the Casimir functionals of the corresponding
Hamiltonian operators, and therefore, by and large, there is only one Hamiltonian for every family of Hamiltonian operators.
}

An infinite family of compatible Hamiltonian structures of the system~$\mathcal S$
that are parameterized by an arbitrary function of a single argument
was constructed in~\cite{OpanasenkoBihloPopovychSergyeyev2020b}.
In fact, the elements of this family exhaust all the local hydrodynamic-type Hamiltonian structures of~$\mathcal S$
and are local prolongations, to~$\mathcal S$, of the Hamiltonian structure for~$\tilde{\mathcal S_0}$
that is related to the operator~$\mathfrak H^1$.

\begin{theorem}\label{theorem:IDFMHamiltonianStructures}
The system~$\mathcal S$ admits an infinite family of compatible local Hamiltonian operators~$\mathfrak H^1_\Theta$
parameterized by ah arbitrary smooth function~$\Theta$ of~$\ri^3$,
\begin{gather*}
\mathfrak H^1_\Theta={\rm e}^{\ri^2-\ri^1}\left(\mathop{\rm diag}\big(-1,1,{\rm e}^{\ri^2-\ri^1}\Theta(\ri^3)\big)\mathrm D_x-\frac12
\begin{pmatrix}
\ri^2_x-\ri^1_x  &  \ri^1_x-\ri^2_x  &  -2\ri^3_x\\[1ex]
\ri^2_x-\ri^1_x  &  \ri^1_x-\ri^2_x  &  -2\ri^3_x\\[1ex]
2\ri^3_x         &  2\ri^3_x         &  f^{33}
\end{pmatrix}\right),
\end{gather*}
where $f^{33}:=-{\rm e}^{\ri^2-\ri^1}\left(2(\ri^2_x-\ri^1_x)\Theta+\ri^3_x\Theta_{\ri^3}\right)$.
\end{theorem}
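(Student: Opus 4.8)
The plan is a direct verification based on the local ($N=0$) instance of Theorem~\ref{thm:NonlocalHamiltonianOpsWithAffinors}: an operator of the form~\eqref{IDFM:eq:LocalHamOper} with $\det(g^{ij})\ne0$ is Hamiltonian if and only if $g=(g^{ij})$ is a (pseudo-)Riemannian metric, its subleading coefficients are $b^i_{jk}=-g^{is}\Gamma^j_{sk}$ with $\Gamma$ the Christoffel symbols of~$g$ (so that the associated connection is symmetric and metric-compatible), and $g$ is flat, $R^{ij}{}_{kl}=0$. For $\mathfrak H^1_\Theta$ one reads off
\[
g^{ij}={\rm e}^{\ri^2-\ri^1}\mathop{\rm diag}\big(-1,\,1,\,{\rm e}^{\ri^2-\ri^1}\Theta(\ri^3)\big),\qquad
g_{ij}=\mathop{\rm diag}\big(-{\rm e}^{\ri^1-\ri^2},\,{\rm e}^{\ri^1-\ri^2},\,{\rm e}^{2(\ri^1-\ri^2)}/\Theta(\ri^3)\big),
\]
which is a genuine metric wherever $\Theta(\ri^3)\ne0$. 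First I would confirm the skew-symmetry condition $\p_kg^{ij}=b^i_{jk}+b^j_{ik}$; this holds entry by entry, straight from the explicit shape of the subleading matrix and of $f^{33}$.

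The differential-geometric core is twofold. Computing the Christoffel symbols of $g_{ij}$ and checking that the part of $\mathfrak H^1_\Theta$ linear in $\ri_x$ is precisely $-g^{is}\Gamma^j_{sk}\ri^k_x$ identifies that connection as the Levi-Civita one, hence symmetric and metric-compatible. The partially coupled structure of~$\mathcal S$ is helpful here: $g$ and $\Gamma$ are block-triangular, the $(\ri^1,\ri^2)$-block being exactly the Nutku metric~\cite{Nutku1987} underlying $\mathfrak H^1$ for~$\mathcal S_0$, so only the Christoffel symbols carrying the index~$3$ need to be produced. Next one computes the curvature and verifies $R^{ij}{}_{kl}=0$ identically in~$\ri$ and for every~$\Theta$; again the block structure annihilates most components, and the delicate ones are those mixing the index~$3$ with $1,2$, into which $\Theta$, $\Theta_{\ri^3}$ and $\Theta_{\ri^3\ri^3}$ all enter and must cancel. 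Equivalently, and perhaps more transparently, one may exhibit flat coordinates: the change of dependent variables that flattens the Nutku $2$-metric, supplemented by a primitive of an appropriate function of~$\ri^3$ built from~$\Theta$, reduces $g$ to a constant form; this route simultaneously establishes flatness and explains, a posteriori, the precise shape of~$f^{33}$.

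It then remains to exhibit a Hamiltonian functional for~$\mathcal S$ relative to $\mathfrak H^1_\Theta$. Since a hydrodynamic-type density $h_\Theta=h_\Theta(\ri)$ carries no $\ri_x$, the requirement $\ri^i_t=\mathfrak H^{1,ij}_\Theta\,\delta H_\Theta/\delta\ri^j=\mathfrak H^{1,ij}_\Theta\,\p_jh_\Theta$ with $H_\Theta=\int h_\Theta\,{\rm d}x$ turns, upon equating the coefficients of $\ri^1_x$, $\ri^2_x$ and $\ri^3_x$ on both sides, into an overdetermined second-order linear system for $h_\Theta$. I would integrate it, obtaining $h_\Theta$ as a Hamiltonian density of $\mathcal S_0$ with respect to $\mathfrak H^1$ augmented by a $\ri^3$-dependent summand governed by an ordinary differential equation in~$\ri^3$ involving~$\Theta$, and then check that the remaining cross-equations of the system are satisfied — they are, by the flatness of~$g$. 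Finally, compatibility within the family is essentially automatic: the dependence of $\mathfrak H^1_\Theta$ on $\Theta$ is affine (only $g^{33}$ and $f^{33}$ involve $\Theta$, and linearly, the $(\ri^1,\ri^2)$-block being $\Theta$-independent), so $\mathfrak H^1_{\Theta_1}+\lambda\mathfrak H^1_{\Theta_2}=(1+\lambda)\,\mathfrak H^1_{(\Theta_1+\lambda\Theta_2)/(1+\lambda)}$ for $\lambda\ne-1$ and is thus again a Hamiltonian operator of~$\mathcal S$, while the value $\lambda=-1$ is handled directly or by continuity.

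I expect the flatness verification for arbitrary~$\Theta$ — making all $\Theta_{\ri^3}$- and $\Theta_{\ri^3\ri^3}$-terms in the curvature cancel — to be the main obstacle, and the explicit construction of $h_\Theta$ together with the consistency of its overdetermined defining system to be the second most delicate point; everything else is routine bookkeeping facilitated by the block-triangular structure inherited from the partial coupling of~$\mathcal S$.
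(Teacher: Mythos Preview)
The paper does not actually prove this theorem; it is stated as a known result constructed in~\cite{OpanasenkoBihloPopovychSergyeyev2020b}, and the text immediately moves on to the nonlocal prolongations of~$\mathfrak H^2$ and~$\mathfrak H^3$. So there is no in-paper argument to compare against.

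Your proposal is nonetheless a sound and complete direct verification. The three ingredients you isolate---checking that the subleading coefficients of~$\mathfrak H^1_\Theta$ are exactly $-g^{is}\Gamma^j_{sk}\ri^k_x$ for the Levi--Civita connection of the displayed diagonal metric, verifying flatness of that metric (or, equivalently, exhibiting flat coordinates extending those of the Nutku $2$-metric by a primitive built from~$\Theta$), and producing a density~$h_\Theta$ with $\ri_t=\mathfrak H^1_\Theta\,\nabla h_\Theta$---are precisely the content of the Dubrovin--Novikov criterion together with the statement that~$\mathcal S$ is Hamiltonian with respect to~$\mathfrak H^1_\Theta$. Your compatibility argument via the affine dependence on~$\Theta$ is correct; the one caveat is that at $\lambda=-1$ the resulting operator has degenerate leading symbol ($g^{33}=0$), so the Dubrovin--Novikov criterion does not apply literally and you should indeed check Jacobi there directly rather than rely on ``continuity'' alone. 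The only other point worth flagging is that the paper (via the cited reference) claims more than you set out to prove: it asserts that the family~$\mathfrak H^1_\Theta$ \emph{exhausts} all local hydrodynamic-type Hamiltonian structures of~$\mathcal S$, a classification statement your verification-style argument does not address---but that exhaustiveness is not part of the theorem as stated, so your plan matches the stated claim.
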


Since there are no known local prolongations of the Hamiltonian operators~$\mathfrak H^2$ and~$\mathfrak H^3$
of the system~$\tilde{\mathcal S_0}$ to the system~$\mathcal S$,
we consider their nonlocal hydrodynamic-type prolongations $\mathcal N=(\mathcal N^{ij})$
with components of the form~\eqref{IDFM:eq:NonlocalHamOper}.

\begin{theorem}
The system~$\mathcal S$ admits two families of nonlocal first-order Hamiltonian operators of hyd\-rodynamic type,
\begin{gather*}
\hat {\mathfrak H}^2_{\Theta,\alpha}
=\mathrm e^{\ri^2-\ri^1}\left(\mathrm{diag}\big(1,1,\hat{\Theta}\big)\mathrm D_x
+\frac12
\begin{pmatrix}
\ri^2_x-\ri^1_x &-\ri^1_x-\ri^2_x&-2\ri^3_x\\
\ri^1_x+\ri^2_x &\ri^2_x-\ri^1_x &2\ri^3_x\\
2\ri^3_x&-2\ri^3_x& f^{33}
\end{pmatrix}\right)
+\mathfrak T_{\hat w},
\\[1ex]
\hat {\mathfrak H}^3_{\Theta,\alpha}
=\mathrm e^{\ri^2-\ri^1}\mathop{\rm diag}\big(\ri^1,\ri^2,\hat{\Theta}\big)\mathrm D_x
+\frac{\mathrm e^{\ri^2-\ri^1}}2\!
\begin{pmatrix}
\ri^1_x+\ri^1(\ri^2_x-\ri^1_x)\! & -\ri^2\ri^1_x-\ri^1\ri^2_x     &  -2\ri^1\ri^3_x\\
\ri^2\ri^1_x+\ri^1\ri^2_x     & \!\!\ri^2_x+\ri^2(\ri^2_x-\ri^1_x)\!\! &  2\ri^2\ri^3_x\\
2\ri^1\ri^3_x & -2\ri^2\ri^3_x& f^{33}
\end{pmatrix}
{+}\mathfrak T_{\check w},
\end{gather*}
where $\alpha=1,2,3$,
the components of the operator $\mathfrak T_w=(\mathfrak T_w^{ij})_{i,j=1,2,3}$ are of the form~\eqref{IDFM:eq:NonlocalHamOper}, %the equation
\begin{gather*}
\hat w_\alpha=\mathop{\rm diag}\big(c_{\alpha} ,c_{\alpha},c_{\alpha}+\Phi^\alpha\mathrm e^{\ri^2-\ri^1}\big),\\
%$\hat w_\alpha=c_\alpha\mathop{\rm diag}\big(1,1,1\big)
%+\Phi^\alpha\mathrm e^{\ri^2-\ri^1}\mathop{\rm diag}\big(0,0,1\big)$,
\check w_\alpha=c_\alpha\mathop{\rm diag}\big(\ri^1+\ri^2+1,\ri^1+\ri^2-1,\ri^1+\ri^2\big)+
\frac12\Phi^\alpha\mathrm e^{\ri^2-\ri^1}\mathop{\rm diag}\big(0,0,1\big),
\end{gather*}
$\epsilon_1=\epsilon_2=-\epsilon_3=1$, $\hat\Theta=\mathrm e^{\ri^2-\ri^1}\Theta$,
$f^{33}:=2(\ri^2_x-\ri^1_x)\hat\Theta+\hat\Theta_{\ri^3}\ri^3_x$,
$\Phi^\alpha=b_{1\alpha}\Lambda^1+b_{2\alpha}\Lambda^2+b_{3\alpha}$
with arbitrary smooth functions~$\Theta$, $\Lambda^1$ and~$\Lambda^2$ of~$\ri^3$ such that
$\Lambda^1$, $\Lambda^2$ and~the constant function~$1$ of~$\ri^3$ are linearly independent,
$c_\alpha$, $b_{1\alpha}$, $b_{2\alpha}$ and~$b_{3\alpha}$ are arbitrary constants satisfying the system
\[
\sum_\alpha \epsilon_\alpha c_\alpha^2=0,\quad
\sum_\alpha \epsilon_\alpha c_\alpha b_{i\alpha}=0,\ i=1,2,\quad
\sum_\alpha \epsilon_\alpha c_\alpha b_{3\alpha}=-1.
\]
\end{theorem}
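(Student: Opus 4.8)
The plan is to verify the Hamiltonian property of every operator in the two families through the criterion of Theorem~\ref{thm:NonlocalHamiltonianOpsWithAffinors}, after first recording how these operators are singled out as nonlocal prolongations of the operators $\mathfrak H^2$ and $\mathfrak H^3$ of the subsystem~$\mathcal S_0$. One postulates an operator $\mathcal N=(\mathcal N^{ij})_{i,j=1,2,3}$ of the form~\eqref{IDFM:eq:NonlocalHamOper} whose $(\ri^1,\ri^2)$-corner, upon discarding~$\ri^3$, reduces to $\mathfrak H^2$ (respectively $\mathfrak H^3$). This fixes the contravariant metric $g^{ij}$ (the coefficient of~$\mathrm D_x$) up to a single free function $\hat\Theta$ of~$\ri^3$ in the $(3,3)$-slot, imposes the $\mathfrak H^2$- (respectively $\mathfrak H^3$-) Christoffel symbols on the $(\ri^1,\ri^2)$-block of the local part, and, once one demands that the $(\ri^1,\ri^2)$-corner of the nonlocal tail $\mathfrak T_w$ vanish on~$\mathcal S_0$, restricts each affinor $w_\alpha$ to a diagonal form whose $(\ri^1,\ri^2)$-part is a constant multiple $c_\alpha$ of a fixed matrix (the identity for $\hat w$, the matrix of characteristic speeds $\operatorname{diag}(\ri^1+\ri^2+1,\ri^1+\ri^2-1)$ for $\check w$). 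The residual data are then the functions $\Theta,\Lambda^1,\Lambda^2$ of~$\ri^3$ and the constants $c_\alpha,b_{i\alpha}$ inside the $(3,3)$-entry of $w_\alpha$, and the task reduces to showing that the conditions of Theorem~\ref{thm:NonlocalHamiltonianOpsWithAffinors} are equivalent to the displayed constraints on those constants while leaving $\Theta,\Lambda^1,\Lambda^2$ free (subject only to the stated linear independence).

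The conditions of Theorem~\ref{thm:NonlocalHamiltonianOpsWithAffinors} split into a routine part and a hard part. Because $g$ and all $w_\alpha$ are diagonal in the Riemann invariants, $g$ is (pseudo-)Riemannian wherever $\Theta\ne0$, the symmetry $g_{ik}w^k_{\alpha j}=g_{jk}w^k_{\alpha i}$ holds trivially, and the commutativity $[w_\alpha,w_\beta]=0$ is automatic. Computing the Levi-Civita connection of~$g$ and matching it against the $\ri^k_x$-coefficients of the local part of~$\mathcal N$ shows that $\nabla$ is symmetric and metric-compatible; on the $(\ri^1,\ri^2)$-block this is precisely the computation already underlying Theorem~\ref{theorem:IDFMHamiltonianStructures}, and on the full system it is a short extension of it. The Peterson--Mainardi--Codazzi equations $\nabla_k w^i_{\alpha j}=\nabla_j w^i_{\alpha k}$, thanks to the diagonal form, collapse to a handful of first-order identities in $\ri^1,\ri^2,\ri^3$, and one checks that they force the residual $(3,3)$-component of each $w_\alpha$ to be $\Phi^\alpha(\ri^3)\,\mathrm e^{\ri^2-\ri^1}$ up to a constant factor; this is where the exponential weight $\mathrm e^{\ri^2-\ri^1}$ appears.

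The Gauss equation $R^{ij}_{\ \ kl}=\sum_{\alpha}\epsilon_\alpha(w^i_{\alpha l}w^j_{\alpha k}-w^i_{\alpha k}w^j_{\alpha l})$ is the crux and the step I expect to be the main obstacle. Since $g$ is diagonal, only the sectional components $R^{ij}_{\ \ ij}$ with $i\ne j$ can be nonzero, and they are explicit in $\Theta,\Theta_{\ri^3}$ and $\mathrm e^{\ri^2-\ri^1}$; since the $w_\alpha$ are diagonal, the right-hand side collapses in the matching slots to $\pm\sum_\alpha\epsilon_\alpha\lambda^i_\alpha\lambda^j_\alpha$, where $\lambda^i_\alpha$ denotes the $i$-th diagonal entry of~$w_\alpha$. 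The $(1,2)$-slot gives $\sum_\alpha\epsilon_\alpha c_\alpha^2=0$, which is the flatness of the $\mathcal S_0$-block already guaranteeing that the $(\ri^1,\ri^2)$-corner of~$\mathcal N$ is genuinely local, and which --- being incompatible with all $\epsilon_\alpha$ equal once $\sum_\alpha\epsilon_\alpha c_\alpha b_{3\alpha}=-1$ is in force --- forces a mixed signature, normalised to $\epsilon_1=\epsilon_2=-\epsilon_3=1$. Substituting $\Phi^\alpha=b_{1\alpha}\Lambda^1+b_{2\alpha}\Lambda^2+b_{3\alpha}$ into the $(1,3)$- and $(2,3)$-slots and separating coefficients against the linearly independent $\Lambda^1,\Lambda^2,1$ then yields $\sum_\alpha\epsilon_\alpha c_\alpha b_{i\alpha}=0$ for $i=1,2$ and $\sum_\alpha\epsilon_\alpha c_\alpha b_{3\alpha}=-1$, with the $\ri^1,\ri^2$-dependence of the two sides matching identically and the curvature relation closing for arbitrary $\Theta$. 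The delicate points are the index bookkeeping, verifying that the $\mathrm D^{-1}_x$-tail feeds into skew-symmetry and the Jacobi identity exactly through these curvature relations, and, above all, confirming that no further hidden constraint on $\Theta,\Lambda^1,\Lambda^2$ survives, so that the advertised three-function, four-constant family closes exactly.

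Finally, to justify that $\mathcal S$ \emph{admits} each operator I would exhibit a Hamiltonian functional $\mathcal H$ with $\ri^i_t=\mathcal N^{ij}\,\delta\mathcal H/\delta\ri^j$ in view of~$\mathcal S$; following the prolongation philosophy of the paper this $\mathcal H$ is taken as a prolongation, to~$\mathcal S$, of the Hamiltonian of~$\mathcal S_0$ relative to $\mathfrak H^2$ (respectively $\mathfrak H^3$), and, the operator being already known to be Hamiltonian, pinning down its $\ri^3$-dependent part reduces to solving a consistent linear system. As an independent check one may note that the recursion operators $\mathfrak H^2(\mathfrak H^1)^{-1}$ and $\mathfrak H^3(\mathfrak H^1)^{-1}$ of~$\mathcal S_0$, prolonged to~$\mathcal S$ and applied to~$\mathfrak H^1_\Theta$, produce operators of the nonlocal form~\eqref{IDFM:eq:NonlocalHamOper} lying inside these two families, which both motivates the ansatz and corroborates the conclusion.
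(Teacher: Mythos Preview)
Your approach is a direct verification via Ferapontov's criterion (Theorem~\ref{thm:NonlocalHamiltonianOpsWithAffinors}), and it is genuinely different from what the paper does. The paper does not start from the final diagonal affinors with constant upper $2\times2$ blocks; it starts from the \emph{Noether} condition that $\mathcal N$ map cosymmetries of~$\mathcal S$ to symmetries, and from this derives that every admissible affinor has the form $w_\alpha=\mathrm e^{\ri^2-\ri^1}\mathrm{diag}(\Psi^\alpha_{\ri^1},-\Psi^\alpha_{\ri^2},\Phi^\alpha+\Psi^\alpha)$ with $\Psi^\alpha(\ri^1,\ri^2)$ an arbitrary solution of the Klein--Gordon equation $2\Psi^\alpha_{\ri^1\ri^2}=\Psi^\alpha_{\ri^2}-\Psi^\alpha_{\ri^1}$. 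The Jacobi identity then reduces to the system~\eqref{IDFM:eq:NonlocalHamiltonianSystem}, which integrates to a single quadratic relation~\eqref{IDFM:eq:NonlocalHamiltonianSystemIntegr}; the bulk of the paper's proof is a nontrivial case analysis on $d=\dim\langle\Phi^1,\Phi^2,\Phi^3,1\rangle$, culminating in a formal-series argument over the family $u^k=\mathrm e^{k\ri^1+k\ri^2/(1-2k)}$, to force every $\Psi^\alpha$ to be affine in $\mathrm e^{\ri^1-\ri^2}$. Only then do the affinors collapse to the constant-$c_\alpha$ form you take as your starting point. So the paper's proof is a \emph{classification} of all nonlocal hydrodynamic-type prolongations of $\mathfrak H^2$ and $\mathfrak H^3$, whereas your outline is a \emph{verification} that the listed operators are Hamiltonian.

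If the theorem is read as a bare existence statement, your route can in principle succeed, and it is conceptually lighter: diagonality makes the $g$-symmetry and commutativity checks trivial, and the Gauss equations do yield exactly the three displayed constraints on $c_\alpha,b_{i\alpha}$. But two points need tightening. First, your claim that ``demanding the $(\ri^1,\ri^2)$-corner of $\mathfrak T_w$ vanish on~$\mathcal S_0$'' forces the upper $2\times2$ block of each $w_\alpha$ to be $c_\alpha$ times a fixed matrix is not justified; that tail vanishes whenever $\sum_\alpha\epsilon_\alpha(w^i_{\alpha k}\ri^k_x)(w^j_{\alpha l}\ri^l_x)$ is a total $x$-derivative, which is far weaker, and the paper shows the admissible upper blocks are in fact parametrised by Klein--Gordon solutions before the Jacobi identity cuts them down. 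If you only intend an ansatz that happens to work, say so explicitly and drop the language of ``restricts''. Second, the step ``exhibit a Hamiltonian functional $\mathcal H$'' is essential for the operator to be admitted by~$\mathcal S$ and cannot be left at ``a consistent linear system''; the paper replaces this by the Noether condition, which is what actually drives its derivation. In short: your verification buys brevity at the cost of the classification content and of two steps that currently rest on assertion rather than argument.
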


\begin{proof}
First of all, the restriction of the local part of the nonlocal operator~$\mathcal N$ of the form~\eqref{IDFM:eq:NonlocalHamOper}
to $(\ri^1,\ri^2)$ should coincide with the operator to be prolonged,~$\mathfrak H^2$ or~$\mathfrak H^3$.
To find further restrictions on~$\mathcal N$, we employ the fact that any Hamiltonian operator is Noether,
that is, it maps cosymmetries of the system under study into its symmetries.

Denote the universal linearization operator~\cite{Bocharov1999}
(or, in other terminology, the Fr\'echet derivative~\cite{Olver1993})
of the left hand side~$S$ of the system~$\mathcal S$ and its adjoint by~$\ell_S$  and \smash{$\ell^\dag_S$},
respectively.
Then the tangent and the cotangent coverings of the system~$\mathcal S$
are the system $T\mathcal S$: $S=0$, $\ell_S(\eta)=0$ and $T^*\mathcal S$: $S=0$, \smash{$\ell^\dag_S(\lambda)=0$},
where $\eta=(\eta^1,\eta^2,\eta^3)^{\mathsf T}$ and $\lambda=(\lambda_1,\lambda_2,\lambda_3)$
are additional dependent variables.
The operator~$\mathcal N$ is a Noether one if $(\ri,\eta)$ with $\eta^i=\mathcal N^{ij}\lambda_j$
is a solution of~$T\mathcal S$ for any solution $(\ri,\lambda)$ of~$T^*\mathcal S$.
Plugging the ansatz for~$\mathcal N$ into the Noether condition,
we see that along the way there arise two types of
nonlocalities to deal with, $\mathrm D_x^{-1}(w^i_{\alpha l}\ri^l_x\lambda^\alpha)$ and
$\mathrm D_x^{-1}\mathrm D_t(w^i_{\alpha l}\ri^l_x\lambda^\alpha)$.
There are two ways to get rid of them.
A part of the nonlocalities are not genuinely nonlocal and, under certain conditions, can be expressed in terms of local variables.
For the other (genuine) nonlocalities, we should ensure that the collected coefficients of them vanish,
which results in the following constraints on the affinors~$w_\alpha$:
\[
w_\alpha=\mathrm e^{\ri^2-\ri^1}\mathrm{diag}\left(\Psi^\alpha_{\ri^1},-\Psi^\alpha_{\ri^2},\Phi^\alpha+\Psi^\alpha\right),
\]
where $\Phi^\alpha$ are arbitrary smooth functions of~$\ri^3$ and $\Psi^\alpha=\Psi^\alpha(\ri^1,\ri^2)$ are arbitrary solutions of the Klein--Gordon equation $\Psi^\alpha_{\ri^2}-\Psi^\alpha_{\ri^1}=2\Psi^\alpha_{\ri^1\ri^2}$.
The very same conditions ensure the non-genuine nonlocalities become local.
The form of the affinors~$w_\alpha$ is reminiscent of the hydrodynamic-type first-order generalized symmetries of the system~$\mathcal S$, see~\cite[Theorem~18]{OpanasenkoBihloPopovychSergyeyev2020a}.

The consideration until this point was valid for both $\mathfrak H^2$ and~$\mathfrak H^3$.
Now we prolong the operator~$\mathfrak H^2$, while the consideration for~$\mathfrak H^3$ is similar.
Solving the remaining determining equations,
we find that $\mathcal N$ is a Noether operator if and only if it is of the form
\begin{gather*}
\mathfrak N^2=\mathrm e^{\ri^2-\ri^1}\left(\mathrm{diag}\left(1,1,\mathrm e^{\ri^2-\ri^1}{\Theta}\right)\mathrm D_x+\frac12
\begin{pmatrix}
\ri^2_x-\ri^1_x &-\ri^1_x-\ri^2_x&-2\ri^3_x\\
\ri^1_x+\ri^2_x &\ri^2_x-\ri^1_x &2\ri^3_x\\
2\ri^3_x&-2\ri^3_x&f^{33}
\end{pmatrix}\right)+\mathfrak T_w^{ij},
\end{gather*}
where $f^{33}=2\mathrm e^{\ri^2-\ri^1}(\ri^2_x-\ri^1_x)\Theta+\bar\Theta$
for some smooth functions~$\Theta$ of~$\ri^3$ and~$\bar\Theta$ of~$(\ri^3,\mathrm e^{\ri^2-\ri^1}\ri^3_x)$.
In addition, the operator~$\mathfrak N^2$ is formally skew-adjoint
if and only if $\bar\Theta=\mathrm e^{\ri^2-\ri^1}\ri^3_x\Theta_{\ri^3}$,
and this is exactly the condition that makes~$\mathfrak N^2$ to be of hydrodynamic type.
Moreover, the Jacobi identity for the Poisson bracket associated with~$\mathfrak N^2$
is equivalent to the following collection of additional constraints on the parameter functions~$\Phi^\alpha$ and~$\Psi^\alpha$:
\begin{subequations}\label{IDFM:eq:NonlocalHamiltonianSystem}
\begin{gather}
\sum_{\alpha=1}^3\ \epsilon_\alpha (\Phi^\alpha+\Psi^\alpha)\Psi^\alpha_{\ri^1}=-\mathrm e^{\ri^1-\ri^2},\label{IDFM:eq:NonlocalHamiltonianSystemA}\\
\sum_{\alpha=1}^3\ \epsilon_\alpha(\Phi^\alpha+\Psi^\alpha)\Psi^\alpha_{\ri^2}=\mathrm e^{\ri^1-\ri^2}\label{IDFM:eq:NonlocalHamiltonianSystemB},\\
\sum_{\alpha=1}^3\ \epsilon_\alpha\Psi^\alpha_{\ri^1}\Psi^\alpha_{\ri^2}=0\label{IDFM:eq:NonlocalHamiltonianSystemC}.
\end{gather}
The system~\eqref{IDFM:eq:NonlocalHamiltonianSystem} follows from the condition
$R^{ij}_{\ \ kl}=\sum_\alpha\epsilon_\alpha(w^i_{\alpha l}w^j_{\alpha k}-w^i_{\alpha k}w^j_{\alpha l})$,
which is one of the conditions in Theorem~\ref{thm:NonlocalHamiltonianOpsWithAffinors}
that are required for the Jacobi identity.
The others of these conditions,
which are the commutativity of the affinors~$\omega_\alpha$,
the $j\leftrightarrow k$ symmetry of $\nabla_kw^i_{\alpha j}$
and the $i\leftrightarrow j$ symmetry of $g_{ik}w^k_{\alpha j}$, are then satisfied automatically.
\end{subequations}

Let us start with the prolongation of the operator~$\mathfrak H^2$.
In order to obtain its analogue~$\hat{\mathfrak H}^2_{\Theta,\alpha}$,
we have to construct all the solutions of the system on the parameter functions~$\Phi^\alpha$ and $\Psi^\alpha$
that consists of the equations~\eqref{IDFM:eq:NonlocalHamiltonianSystem},
three copies $\Psi^\alpha_{\ri^2}-\Psi^\alpha_{\ri^1}=2\Psi^\alpha_{\ri^1\ri^2}$ of the Klein--Gordon equation
and the equations $\Phi^\alpha_{\ri^1}=\Phi^\alpha_{\ri^2}=0$ and $\Psi^\alpha_{\ri^3}=0$
just meaning that $\Phi^\alpha=\Phi^\alpha(\ri^3)$ and $\Psi^\alpha=\Psi^\alpha(\ri^1,\ri^2)$.
Note that any permutation of the tuples $(\Phi^\alpha,\Psi^\alpha)$
and the gauge transformation $(\Phi^\alpha,\Psi^\alpha)\mapsto(\Phi^\alpha-c_\alpha,\Psi^\alpha+c_\alpha)$
with arbitrary constants~$c_\alpha$ are symmetry transformations of this system.

The equations~\eqref{IDFM:eq:NonlocalHamiltonianSystemA} and~\eqref{IDFM:eq:NonlocalHamiltonianSystemB} jointly integrate to
\begin{gather}\label{IDFM:eq:NonlocalHamiltonianSystemIntegr}
\sum_{\alpha=1}^3\epsilon_\alpha\left(\Phi^\alpha+\frac12\Psi^\alpha\right)\Psi^\alpha=\Omega(\ri^3)-\mathrm e^{\ri^1-\ri^2}
\end{gather}
for some function~$\Omega$ of~$\ri^3$,
while the equation~\eqref{IDFM:eq:NonlocalHamiltonianSystemC} is
a differential consequence of the system of~\eqref{IDFM:eq:NonlocalHamiltonianSystemIntegr}
and three copies $\Psi^\alpha_{\ri^2}-\Psi^\alpha_{\ri^1}=2\Psi^\alpha_{\ri^1\ri^2}$ of the Klein--Gordon equation.
Fixing values of~$(\ri^1,\ri^2)$, we derive that $\Omega=a_\alpha\Phi^\alpha+a_0$
with constants~$a_0$ and~$a_\alpha$.
Then the condition~\eqref{IDFM:eq:NonlocalHamiltonianSystemIntegr} takes the form
\begin{gather}\label{IDFM:eq:NonlocalHamiltonianequivCond}
\epsilon_\alpha\Phi^\alpha(\Psi^\alpha-a_\alpha)+\frac{\epsilon_\alpha}2\Psi^\alpha\Psi^\alpha+\mathrm e^{\ri^1-\ri^2}-a_0=0.
\end{gather}
Modulo the gauge transformation $(\Phi^\alpha,\Psi^\alpha)\mapsto(\Phi^\alpha-c_\alpha,\Psi^\alpha+c_\alpha)$,
we can set $a_\alpha$ to arbitrary required values. The equality~\eqref{IDFM:eq:NonlocalHamiltonianequivCond}
implies that all~$\Psi^\alpha$'s are linear in~$\mathrm{e}^{\ri^1-\ri^2}$. Indeed, let $d:=\dim\langle\Phi^1,\Phi^2,\Phi^3,1\rangle$,
where $1$ denotes the constant function of~$\ri^3$ with value~1.
It is obvious that the integer~$d$ is nonzero and not greater than four by its definition.
The equality $d=4$ contradicts~\eqref{IDFM:eq:NonlocalHamiltonianequivCond}.
Therefore, $d\in\{1,2,3\}$.
Consider each of these possible values of~$d$ separately.

\medskip\par\noindent
$\boldsymbol{d=3.}$
Up to permutations of the pairs $(\Phi^\alpha,\Psi^\alpha)$,
we have $\Phi^3=b_i\Phi^i+b_0$ for some constants~$b_0$ and~$b_i$.
Here and in what follows the index~$i$ runs from~1 to~2.
We substitute the above expression for~$\Phi^3$ into~\eqref{IDFM:eq:NonlocalHamiltonianequivCond}
and split the obtained equation with respect to~$\Phi^i$,
which gives
$\Psi^i=-\epsilon_3\epsilon_ib_i(\Psi^3-a_3)+a_i$
and, after setting $a_i+\epsilon_3\epsilon_ib_ia_3=0$ and denoting $\tilde a_0:=a_0+a_3\epsilon_3b_0$,
\[
\frac12(\epsilon_ib_ib_i+\epsilon_3)(\Psi^3)^2+\epsilon_3b_0\Psi^3+\mathrm e^{\ri^1-\ri^2}-\tilde a_0=0.
\]
Since the function~$\Psi^3$ satisfies the Klein--Gordon equation $2\Psi^3_{\ri^1\ri^2}=\Psi^3_{\ri^2}-\Psi^3_{\ri^1}$,
the last equation implies $(\epsilon_ib_ib_i+\epsilon_3)\Psi^3_{\ri^1}\Psi^3_{\ri^2}=0$,
and thus all $\Psi^\alpha$'s are linear in~$\mathrm{e}^{\ri^1-\ri^2}$.

\medskip\par\noindent
$\boldsymbol{d=2.}$
Up to permutations of $(\Phi^\alpha,\Psi^\alpha)$,
we can assume that $\dim\langle\Phi^3,1\rangle=2$.
Then $\Phi^i=b_{i1}\Phi^3+b_{i0}$ for some constants~$b_{i0}$ and~$b_{i1}$.
We substitute these expressions for~$\Phi^i$ into~\eqref{IDFM:eq:NonlocalHamiltonianequivCond}
and split the obtained equation with respect to~$\Phi^3$,
which gives
\begin{gather*}
\Psi^3=-\epsilon_3\epsilon_ib_{i1}(\Psi^i-a_i)+a_3,\\
\epsilon_ib_{i0}(\Psi^i-a_i)+\frac{\epsilon_i}2\Psi^i\Psi^i
+\frac{\epsilon_3}2(-\epsilon_3\epsilon_ib_{i1}(\Psi^i-a_i)+a_3)^2
+\mathrm e^{\ri^1-\ri^2}-a_0=0.
\end{gather*}
The second equation implies that the solutions~$\Psi^1$ and~$\Psi^2$ of the Klein--Gordon equation
are functionally dependent and thus this dependence can only be affine,
$\Psi^i=c_{i1}\Upsilon+c_{i0}$ for some constants~$c_{i0}$ and~$c_{i1}$ with $(c_{11},c_{21})\ne(0,0)$
and a nonconstant solution $\Upsilon=\Upsilon(\ri^1,\ri^2)$ of the same equation.
Under the derived representation for~$\Psi^i$,
the second equation reduces to an equation with respect to~$\Upsilon$,
\begin{gather*}
\epsilon_ib_{i0}(c_{i1}\Upsilon+c_{i0}-a_i)
+\frac{\epsilon_i}2(c_{i1}\Upsilon+c_{i0})^2
+\frac{\epsilon_3}2(-\epsilon_3\epsilon_ib_{i1}(c_{i1}\Upsilon+c_{i0}-a_i)+a_3)^2
+\mathrm e^{\ri^1-\ri^2}-a_0\\
=
\epsilon_ib_{i0}c_{i1}\Upsilon+\epsilon_ib_{i0}(c_{i0}-a_i)
+\frac{\epsilon_i}2c_{i1}^2\Upsilon^2+\epsilon_ic_{i1}c_{i0}\Upsilon+\frac{\epsilon_i}2c_{i0}^2
\\\qquad{}
+\frac{\epsilon_3}2(\epsilon_ib_{i1}c_{i1})^2\Upsilon^2
-\epsilon_ib_{i1}c_{i1}(a_3-\epsilon_3\epsilon_{i'}b_{i'1}(c_{i'0}-a_{i'}))\Upsilon
+\frac{\epsilon_3}2(a_3-\epsilon_3\epsilon_{i'}b_{i'1}(c_{i'0}-a_{i'}))^2
\\\qquad{}
+\mathrm e^{\ri^1-\ri^2}-a_0
=0.
\end{gather*}
Since $\Upsilon$ satisfies the Klein--Gordon equation,
this equation cannot be quadratic, which proves our point.

\medskip\par\noindent
$\boldsymbol{d=1.}$
Then all $\Phi^\alpha$ are constants and can be set to be equal to zero modulo gauge symmetries.
Thus, the parameter functions~$\Psi^\alpha$ satisfy the system of three copies of the Klein--Gordon equation, $2\Psi^\alpha_{\ri^1\ri^2}=\Psi^\alpha_{\ri^2}-\Psi^\alpha_{\ri^1}$,
with the constraint
\begin{gather}\label{IDFM:eq:psi_eq}
\sum_{\alpha=1}^3 \epsilon_\alpha(\Psi^\alpha)^2=C-2\mathrm e^{\ri^1-\ri^2}.
\end{gather}
Compatibility problem for the above system is computationally difficult, and
therefore, we use the fact that the Klein--Gordon equation has a family of solutions $u^k=\mathrm e^{k\ri^1/2+k\ri^2/(1-2k)}$,
to look for a solution of the equation~\eqref{IDFM:eq:psi_eq} as a formal series
\[
\Psi^\alpha=\sum\limits_{k\in\mathbb N_0} c_{\alpha k}u^k,
\]
which is an adaptation of the Taylor series of~$\Psi^\alpha$.
\noprint{
Indeed, let $g(\tau,\rho)$ be an analytic solution of the Klein--Gordon equation
$2\tau g_{\tau\rho}=g_{\rho}-\tau g_\tau$. Then the coefficients of its Taylor expansion at $\tau=\tau_0$,
\[
g=\sum\limits_{k\in\mathbb N_0} f_k(\rho)(\tau-\tau_0)^k=
\sum\limits_{k\in\mathbb N_0} f_k(\rho) \sum\limits_{i=0}^k{k\choose i} \tau^i\tau_0^{k-i}=\sum\limits_{k\in\mathbb N_0} h_k(\rho)\tau^k,
\]
must satisfy
\[
2kh_k'-h_k'+kh_k=0,
\]
that is, $h_k=c_k\mathrm e^{k\rho^2/(1-2k)}$.
Changing the coordinates $\tau=\mathrm e^{\ri^1}$, $\rho=\ri^2$, $g=\Psi^\alpha$,
we get the desired series.
}
The equation~\eqref{IDFM:eq:psi_eq} then takes the form
\begin{gather}\label{IDFM:eq:psi_eq_ser}
\sum_{\alpha=1}^3 \epsilon_\alpha\left(\sum\limits_{k\in\mathbb N_0} c_{\alpha k}u^k\right)^2=C-2\mathrm e^{\ri^1-\ri^2}.
\end{gather}
It is straightforward to show that $u^ku^l=u^{k'}u^{l'}$ if and only if $(k,l)=(k',l')$ or $(k,l)=(l',k')$,
and therefore the equation~\eqref{IDFM:eq:psi_eq_ser} can further be rewritten as
\begin{gather*}
\sum_{\alpha=1}^3 \epsilon_\alpha\left(\sum\limits_{k,l\in\mathbb N_0} c_{\alpha k}u^k c_{\alpha l}u^l\right)
=C-2\mathrm e^{\ri^1-\ri^2}.
\end{gather*}
The ``regular'' terms ($(k,l)\notin\{(0,0),(0,1),(1,0)\}$) of the above series give rise to equalities
\[
\sum_{\alpha=1}^3 \epsilon_\alpha c_{\alpha k}c_{\alpha l}=0,
\]
which imply that $c_{\alpha k}=0$ for $k\neq0$ if all~$\epsilon$'s are equal,
and then the (1,0)-term gives a contradiction.
Therefore, not all $\epsilon$'s are equal and without loss of generality we can assume that $\epsilon_1=\epsilon_2=-\epsilon_3=1$.
This leads to the equality
\begin{gather}\label{IDFM:eq:coeffs_kl}
c_{1k}c_{1l}+c_{2k}c_{2l}-c_{3k}c_{3l}=0
\end{gather}
for all regular tuples $(k,l)$, and in particular,
$
c_{3 k}^2=c_{1 k}^2+c_{2 k}^2
$
for all $k\neq0$, which allows to deduce from the equality~\eqref{IDFM:eq:coeffs_kl} that
\[c_{1k}c_{2l}-c_{2k}c_{1l}=0,\]
meaning all regular tuples $(c_{1k},c_{2k})$ are proportional to each other, in particular, $(c_{1k},c_{2k})=\kappa(c_{11},c_{21})$.
Plugging it back into~\eqref{IDFM:eq:coeffs_kl} we have
\begin{gather*}
\kappa c_{31}^2-c_{31}c_{3l}=0,
\end{gather*}
and thus $(c_{1k},c_{2k},c_{3k})=\kappa(c_{11},c_{21},c_{31})$ unless $c_{31}\neq0$.
With this in mind, we can write the $(k,0)$-term as
\begin{gather*}
\kappa(c_{11}c_{10}+c_{21}c_{20}-c_{31}c_{30})=0,
\end{gather*}
which is true regardless of the value of $c_{31}$.
It is in a clear contradiction with the (1,0)-term
\begin{gather*}
c_{11}c_{10}+c_{21}c_{20}-c_{31}c_{30}=-1,
\end{gather*}
unless $\kappa=0$.
Thus, the Klein--Gordon solutions are again of the form $\Psi^\alpha=c_{\alpha0}+c_{\alpha1}\mathrm e^{\ri^1-\ri^2}$.
Recall that we can set $c_{\alpha0}=0$ by the gauge transformation, and thus
the initial condition~\eqref{IDFM:eq:NonlocalHamiltonianSystemIntegr} is
\begin{gather*}
\sum_{\alpha=1}^3\epsilon_\alpha\left(\Phi^\alpha+\frac12c_{\alpha1}\mathrm e^{\ri^1-\ri^2}\right)c_{\alpha1}\mathrm e^{\ri^1-\ri^2}=
\Omega(\ri^3)-\mathrm e^{\ri^1-\ri^2}.
\end{gather*}
In particular, $\epsilon_1=\epsilon_2=-\epsilon_3=1$, $\Omega(\ri^3)=0$ and
\begin{gather*}
c_{11}^2+c_{21}^2=c_{31}^2,\\
c_{11}\Phi^1+c_{21}\Phi^2-c_{31}\Phi^3=-1.
\end{gather*}

For the operator~$\mathfrak H^3$ consideration is similar. Its analogue of the system~\eqref{IDFM:eq:NonlocalHamiltonianSystem} is
\begin{gather*}
\sum_{\alpha=1}^3\ \epsilon_\alpha (\Phi^\alpha+\Psi^\alpha)\Psi^\alpha_{\ri^1}=\frac12(\ri^1+\ri^2+1)\mathrm e^{\ri^1-\ri^2},\\
\sum_{\alpha=1}^3\ \epsilon_\alpha(\Phi^\alpha+\Psi^\alpha)\Psi^\alpha_{\ri^2}=-\frac12(\ri^1+\ri^2-1)\mathrm e^{\ri^1-\ri^2},
\end{gather*}
which integrates to
\begin{gather*}
\sum_{\alpha=1}^3\ \epsilon_\alpha (\Phi^\alpha+\frac12\Psi^\alpha)\Psi^\alpha=\Omega(\ri^3)-\frac12(\ri^1+\ri^2)\mathrm e^{\ri^1-\ri^2}.
\end{gather*}
The essential difference to the $\mathfrak H_2$-case lies in the series expansion in the section $d=1$.
The Klein--Gordon equation $2u_{xy}=u_y-u_x$ admits a Lie symmetry $2x\p_x-2y\p_y+(x+y)u\p_u$,
whose characteristic is $\mathrm J[u]:=u(x+y)-2xu_x+2yu_y$.
We thus represent a solution of the Klein--Gordon equation as the formal series
\[
\Psi^\alpha=\sum\limits_{k\in\mathbb N_0} c_{\alpha k}v^k,
\]
where $v_k:=((1-2k)^2\ri^1+\ri^2)\mathrm e^{k\ri^1+k\ri^2/(1-2k)}$ is the image of the solution~$u^k$
under the aforementioned Lie symmetry, $v^k=(1-2k)\mathrm J[\mathrm e^{k\ri^1+k\ri^2/(1-2k)}]$.
%According to~\cite[Lemma~7]{KovalPopovych2023}, every solution of the Klein--Gordon equation can be presented
%as an image~$J[g]$ of another solution~$g$ away of the point~$(\ri^1,\ri^2)=(0,0)$.
%Thus, the power extension above is reasonable.
\end{proof}

\begin{remark}\label{rem:ReciprocalTrans}
As discussed in the introduction, Hamiltonian operators of Ferapontov type can be obtained from Dubrovin--Novikov operators by applying a reciprocal transformation.
Thus, a reciprocal transformation associated with the solution $\Psi(\ri^1,\ri^2)=\mathrm e^{\ri^1-\ri^2}$ of the Klein--Gordon equation,
\[
\mathrm d\tilde x=\mathrm e^{\ri^1-\ri^2}(\mathrm dx-(\ri^1+\ri^2)\mathrm dt),\quad \mathrm d\tilde t=\mathrm dt
\]
maps the system~$\mathcal S$ to the hydrodynamic-type system
\begin{gather*}
\ri^1_{\tilde t}=-\mathrm e^{\ri^1-\ri^2}\ri^1_{\tilde x},\quad \ri^2_{\tilde t}=\mathrm e^{\ri^1-\ri^2}\ri^2_{\tilde x},\quad \ri^3_{\tilde t}=0,
\end{gather*}
which possesses the three families of local Hamiltonian operators of the Dubrovin--Novikov type,
\begin{gather*}
\tilde{\mathfrak H}^1_{\tilde \Theta}=\mathrm e^{\ri^1-\ri^2}
\left(\mathrm{diag}(-1,1,\mathrm e^{\ri^2-\ri^1}\tilde\Theta)\mathrm D_{\tilde x}-\frac12
\begin{pmatrix}
\ri^2_{\tilde x}-\ri^1_{\tilde x} & \ri^1_{\tilde x}-\ri^2_{\tilde x} & 0\\
\ri^2_{\tilde x}-\ri^1_{\tilde x} & \ri^1_{\tilde x}-\ri^2_{\tilde x} & 0\\
0&0& -\mathrm e^{\ri^2-\ri^1}{\ri^3_{\tilde x}}\tilde\Theta_{\ri^3}
\end{pmatrix}\right),\\
\tilde{\mathfrak H}^2_{\tilde \Theta}=\mathrm e^{\ri^1-\ri^2}
\left(\mathrm{diag}(1,1,\mathrm e^{\ri^2-\ri^1}\tilde\Theta)\mathrm D_{\tilde x}+\frac12
\begin{pmatrix}
\ri^1_{\tilde x}-\ri^2_{\tilde x}  & \ri^1_{\tilde x}+\ri^2_{\tilde x} & 0 \\
-\ri^1_{\tilde x}-\ri^2_{\tilde x} & \ri^1_{\tilde x}-\ri^2_{\tilde x} & 0 \\
0&0& \mathrm e^{\ri^2-\ri^1}{\ri^3_{\tilde x}}\tilde\Theta_{\ri^3}
\end{pmatrix}\right),\\
\tilde{\mathfrak H}^3_{\tilde \Theta}=\mathrm e^{\ri^1{-}\ri^2}
\mathrm{diag}(\ri^1,\ri^2,\mathrm e^{\ri^2{-}\ri^1}\tilde\Theta)\mathrm D_{\tilde x}
{+}\frac{\mathrm e^{\ri^1{-}\ri^2}}2
\begin{pmatrix}
\ri^1_{\tilde x}{+}\ri^1\ri^1_{\tilde x}{-}\ri^1\ri^2_{\tilde x}  & \ri^2\ri^1_{\tilde x}{+}\ri^1\ri^2_{\tilde x} & 0\\
{-}\ri^2\ri^1_{\tilde x}{-}\ri^1\ri^2_{\tilde x}     & \ri^2\ri^1_{\tilde x}{+}\ri^2_{\tilde x}{-}\ri^2\ri^2_{\tilde x} & 0\\
0&0& \mathrm e^{\ri^2{-}\ri^1}{\ri^3_{\tilde x}}\tilde\Theta_{\ri^3}
\end{pmatrix}
\end{gather*}
parameterized by an arbitrary function~$\tilde\Theta$ of~$\ri^3$.
\iffalse
with the associated families of Hamiltonians
\begin{gather*}
\mathcal H^1_{c_1,c_2,\tilde\Xi}=\frac14(\ri^1+\ri^2)^2-\ri^1+c_1{\rm e}^{\ri^1-\ri^2}+c_2(\ri^1+\ri^2)+\tilde\Xi,\\
\mathcal H^2_{c_1,c_2,\tilde\Xi}=\mathrm e^{(\ri^2-\ri^1)/2}\left(c_1\sin\frac{\ri^1+\ri^2}2+c_2\cos\frac{\ri^1+\ri^2}2\right)-(\ri^1+\ri^2)+\tilde\Xi,\\
\mathcal H^3_{c_1,c_2,\tilde\Xi}=c_1\operatorname{erf}\left(\frac{\sqrt{\ri^2}+\sqrt{-\ri^1}}{\sqrt{2}}\right)+
c_2\operatorname{erf}\left(\frac{\sqrt{\ri^2}-\sqrt{-\ri^1}}{\sqrt{2}}\right),\todo
\end{gather*}
parameterized by an arbitrary smooth function~$\tilde\Xi$ of~$\ri^3$ that additionally satisfies $2\tilde\Theta\tilde\Xi_{\ri^3\ri^3}+\tilde\Theta_{\ri^3}\tilde\Xi_{\ri^3}=0$.
\fi
\end{remark}

A similar example of prolonging a local Hamiltonian operator to a nonlocal one arises for $\p_y$-reduction of the (1+2)-dimensional shallow water model,
see~\cite[Theorem~4.4, p.~130]{Opanasenko2021}.

\section*{Acknowledgments}
We thank E. Ferapontov for useful discussions.
SO acknowledges financial support from Istituto Nazionale di Fisica Nucleare (INFN) by IS-CSN4
\emph{Mathematical Methods of Nonlinear Physics} (MMNLP), from the GNFM of Istituto Nazionale di Alta Matematica ``F. Severi''
(INdAM) and from Dipartimento di Matematica e Fisica ``E. De Giorgi'' of the Universit\`a del Salento.
It was also supported in part by the Ministry of Education, Youth and Sports of the Czech Republic (M\v SMT \v CR)
under RVO funding for I\v C47813059.
ROP expresses his gratitude for the hospitality shown by the University of Vienna during his long staying at the university.
The authors express their deepest thanks to the Armed Forces of Ukraine and the civil Ukrainian people
for their bravery and courage in defense of peace and freedom in Europe and in the entire world from russism.

\footnotesize

\end{document}